%
%
\documentclass{elsarticle}

\usepackage{amsbsy,amsmath,amssymb}

\usepackage{graphics}
\usepackage{tikz}
\usepackage[normalem]{ulem}

\newcommand{\dg}{\ensuremath{m}}

\usepackage[pdftex,%
    colorlinks,%
    pdfauthor={Esparza, Ganty, Kiefer, Luttenberger},%
    hypertexnames=false,%
    pdftitle={Parikh's Theorem: A simple and direct construction}]{hyperref}

\def\set#1{{\left\{ #1 \right\}}}
\def\nats{{\mathbb{N}}}

\makeatletter
\begingroup \catcode `|=0 \catcode `[= 1
\catcode`]=2 \catcode `\{=12 \catcode `\}=12
\catcode`\\=12 |gdef|@xcomment#1\end{comment}[|end[comment]]
|endgroup
\def\@comment{\let\do\@makeother \dospecials\catcode`\^^M=10\def\par{}}
\def\begincomment{\@comment\@xcomment}
\makeatother

\newif
  \iflong
  \longtrue
\newif
  \ifshort
    \shortfalse

\newtheorem{theorem}{Theorem}[section]
\newtheorem{lemma}{Lemma}[section]
\newtheorem{proposition}{Proposition}[section]
\newtheorem{fact}{Fact}[section]
\newdefinition{definition}{Definition}[section]
\newproof{proof}{Proof}

\def\vec{\boldsymbol}
\def\by#1{\mathop{{\hbox{\setbox0=\hbox{$\scriptstyle{#1\quad}$}{$\buildrel{\>\>\;#1\quad}\over{\hbox to \wd0{\rightarrowfill}}$}}}}}

\newcommand{\step}[2]{#1\!\!~\Rightarrow~\!\!#2}

\begin{document}

\title{Parikh's Theorem: A simple and direct automaton construction}

\author[mu]{Javier Esparza}
\ead{esparza@model.in.tum.de}
\author[pg]{Pierre Ganty}
\ead{pierre.ganty@imdea.org}
\author[ox]{Stefan Kiefer}
\ead{stefan.kiefer@comlab.ox.ac.uk}
\author[mu]{Michael Luttenberger}
\ead{luttenbe@in.tum.de}

\address[mu]{Institut f\"ur Informatik, Technische Universit\"at M\"unchen, 85748 Garching, Germany}
\address[pg]{The IMDEA Software Institute, Madrid, Spain}
\address[ox]{Oxford University Computing Laboratory, Oxford, UK}

\begin{abstract}
Parikh's theorem states that the Parikh image of a context-free language is semilinear or,
equivalently, that every context-free language has the same Parikh image as some regular language.
We present a very simple construction that, given a context-free grammar, produces
a finite automaton recognizing such a regular language.
\end{abstract}

\maketitle

The {\em Parikh image} of a word $w$ over an alphabet
$\{a_1, \ldots, a_n\}$ is the vector $(v_1, \ldots, v_n) \in \nats^n$
such that $v_i$ is the number of occurrences of $a_i$ in $w$.
For example, the Parikh image of $a_1a_1a_2a_2$ over the alphabet $\{a_1, a_2, a_3\}$ is $(2,2,0)$.
The Parikh image of a language is the set of Parikh images of its words.
Parikh images are named after Rohit Parikh, who in 1966 proved a
classical theorem of formal language theory which also carries his name.
Parikh's theorem \cite{Parikh66} states that the Parikh image of any context-free language
is \emph{semilinear}. Since semilinear sets coincide with the Parikh images of regular languages,
the theorem is equivalent to the statement that every context-free language has the same
Parikh image as some regular language.
For instance, the language $\{a^nb^n \mid n \geq 0\}$
has the same Parikh image as $(ab)^*$. This statement is also often referred to as
Parikh's theorem, see e.g.~\cite{HK99}, and in fact it has been considered a more natural
formulation \cite{Pil73}.

Parikh's proof of the theorem, as many other subsequent proofs
\cite{Gre72,Pil73,Lee74,Gol77,HK99,AEI02}, is constructive: given a
context-free grammar $G$, the proof produces (at least implicitly) an automaton
or regular expression whose language has the same Parikh image as $L(G)$.
However, these constructions are relatively complicated, not given explicitly,
or yield crude upper bounds: automata of size $\mathcal{O}(n^n)$ for grammars in Chomsky
normal form with $n$ variables (see Section \ref{sec:related} for a detailed
discussion).  In this note we present an explicit and very simple construction
yielding an automaton with $\mathcal{O}(4^n)$ states, for a lower bound of $2^n$.
An application of the automaton is briefly discussed in Section \ref{sec:appl}:
the automaton can be used to algorithmically derive the semilinear set, and,
using recent results on Parikh images of NFAs \cite{ToThesis,KT10}, it leads to
the best known upper bounds on the size of the semilinear set for a given
context-free grammar.

\section{The Construction}

We follow the notation of \cite[Chapter 5]{HMU06}.
Let $G=(V,T,P,S)$ be a context-free grammar with a set~$V = \set{A_1, \ldots, A_n}$
of {\em variables} or {\em nonterminals}, a set $T$ of {\em terminals}, a set $P \subseteq V \times (V \cup T)^*$ of {\em productions}, and an {\em axiom} $S \in V$.
We construct a nondeterministic finite automaton (NFA) whose language has the same
Parikh image as~$L(G)$. The transitions of this automaton will be labeled with words of $T^*$,
but note that by adding intermediate states (when the words have length greater than one)
and removing $\epsilon$-transitions (i.e., when the words have length zero),
such an NFA can be easily brought in the more common form where transition labels are
elements of $T$.

We need to introduce a few notions.
For $\alpha \in (V \cup T)^*$ we denote by
$\Pi_V(\alpha)$ (resp.\ $\Pi_T(\alpha)$) the Parikh image of $\alpha$ where the
components not in $V$ (resp.\ $T$) have been projected away.  Moreover, let
$\alpha_{/V}$ (resp.\ $\alpha_{/T}$) denote the projection of $\alpha$ onto $V$ (resp.\ $T$).
For instance, if $V = \{A_1,A_2\}$, $T=\{a, b, c\}$,
and $\alpha = aA_2bA_1A_1$, then  $\Pi_V(\alpha) = (2,1)$, $\Pi_T(\alpha) = (1,1,0)$ and $\alpha_{/T}=ab$.
A pair $(\alpha, \beta) \in  (V \cup T)^*\times (V \cup T)^*$ is a {\em step}, denoted
by $\step{\alpha}{\beta}$, if there
exist $\alpha_1, \alpha_2 \in (V \cup T)^*$ and a production $A \rightarrow \gamma$ such that
$\alpha = \alpha_1A\alpha_2$ and $\beta=\alpha_1 \gamma \alpha_2$. Notice that given a step
$\step{\alpha}{\beta}$, the strings $\alpha_1, \alpha_2$ and the production $A \rightarrow \gamma$ are unique.
The {\em transition} associated to a step $\step{\alpha}{\beta}$ is the triple
$t(\step{\alpha}{\beta}) = (\Pi_V(\alpha),\gamma_{/T}, \Pi_V(\beta))$. For example, if $V=\{A_1,A_2,A_3\}$ and $T = \{a,b\}$,
then $t(\step{A_2aA_1}{A_2aA_2bA_3}) = ((1,1,0), b, (0,2,1))$.

\begin{definition}
Let $G=(V,T,P,S)$ be a context-free grammar and let $n=|V|$.
The {\em $k$-Parikh automaton} of $G$ is the NFA $M_G^k = (Q,T^*,\delta,q_0,\{q_f\})$ defined as follows:
\begin{itemize}
\item $Q = \set{(x_1,\dots,x_n)\in \nats^n \mid \sum_{i=1}^n x_i\leq k}$;
\item $\delta =  \{ t(\step{\alpha}{\beta}) \mid \mbox{ $\alpha \Rightarrow \beta$ is a step and $\Pi_V(\alpha), \Pi_V(\beta) \in Q$}  \}$;
\item $q_0 = \Pi_V(S)$;
\item $q_f = \Pi_V(\varepsilon) = (0, \ldots, 0)$.
\end{itemize}
\end{definition}
It is easily seen that $M_G^k$ has exactly \({n+k \choose n}\) states.

\noindent
Figure \ref{fig:example} shows the $3$-Parikh automaton of the context-free grammar
with productions $A_1~\rightarrow~A_1A_2 \vert a,  A_2~\rightarrow~bA_2aA_2 \vert cA_1$ and
axiom $A_1$. The states are all pairs $(x_1, x_2)$ such that $x_1 + x_2 \leq 3$. Transition
$(0,2) \by{ba} (0,3)$ comes e.g.\ from the step $A_2A_2 \Rightarrow
bA_2aA_2A_2$, and can be interpreted as follows: applying the production $A_2
\rightarrow bA_2aA_2$ to a word with zero occurrences of $A_1$ and two
occurrences of $A_2$ leads to a word with one new occurrence of $a$ and $b$,
zero occurrences of $A_1$, and three occurrences of $A_2$.

\begin{figure}[h]
    \centering
    \scalebox{0.6}{\input{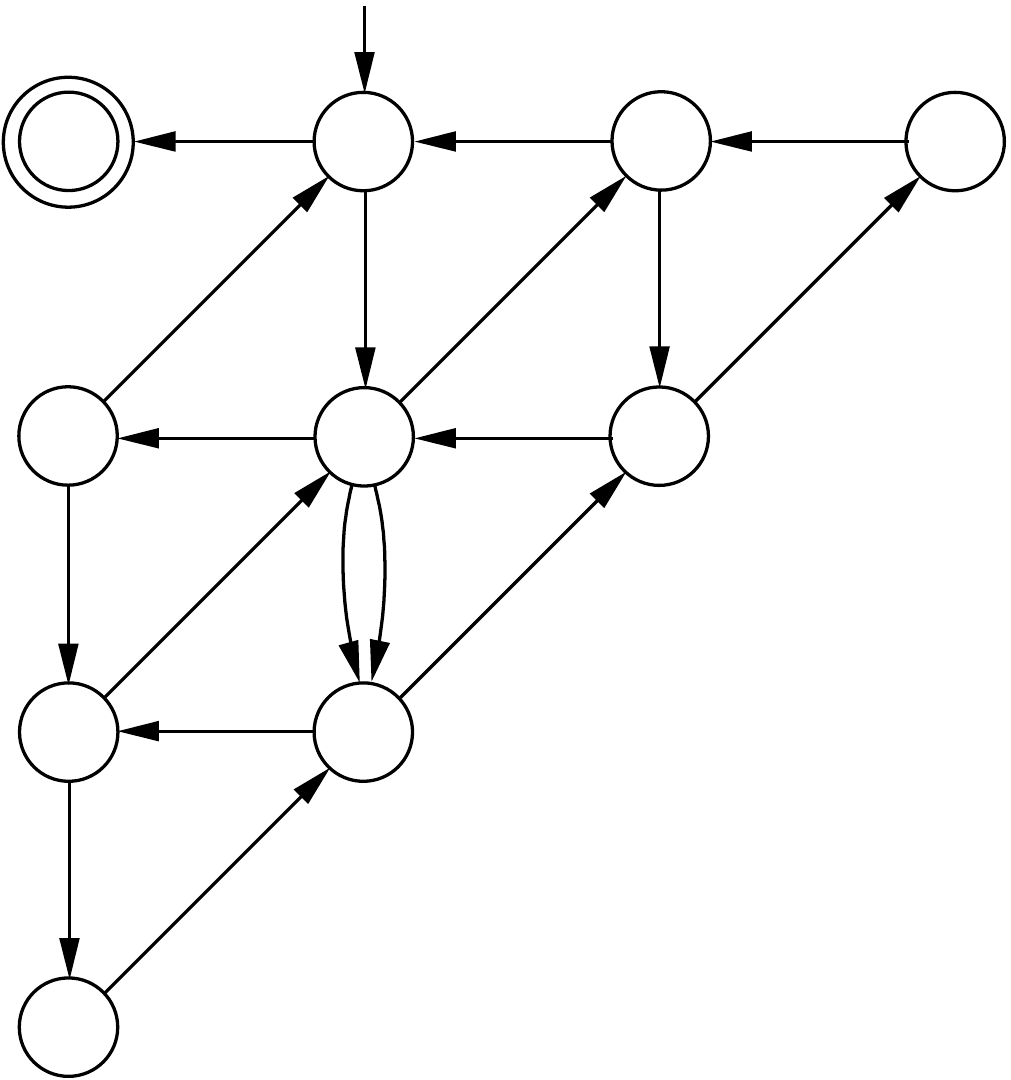_t}}
    \caption{The $3$-Parikh automaton of
    $A_1 \rightarrow A_1A_2 \vert a,\ A_2 \rightarrow bA_2aA_2 \vert cA_1$ with $S = A_1$.
    }
    \label{fig:example}
\end{figure}

We define the {\em degree} of~$G$ by $\dg:=-1+\max\{ |\gamma_{/V}|\,:\, (A\to\gamma) \in P\}$;
 i.e., $\dg + 1$ is the maximal number of variables on the right hand sides. For instance, the degree
of the grammar in Fig.~\ref{fig:example} is $1$.
Notice that if $G$ is in Chomsky normal form then $\dg \le 1$, and
$\dg \le 0$ if{}f $G$ is regular.

In the rest of the note we prove:
\begin{theorem}
\label{thm:main}
 If $G$ is a context-free grammar with $n$ variables and degree~$\dg$,
  then $L(G)$ and $L(M_G^{n\dg+1})$ have the same Parikh image.
\end{theorem}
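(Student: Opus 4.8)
The plan is to prove the two inclusions $\Pi_T(L(M_G^{n\dg+1})) \subseteq \Pi_T(L(G))$ and $\Pi_T(L(G)) \subseteq \Pi_T(L(M_G^{n\dg+1}))$ separately, after isolating the bridge between derivations of $G$ and runs of $M_G^k$. The bridge is this: a derivation $S = \sigma_0 \Rightarrow \sigma_1 \Rightarrow \cdots \Rightarrow \sigma_\ell$ in which every $\sigma_i$ contains at most $k$ variable occurrences maps, step by step, to the run $\Pi_V(\sigma_0) \by{} \Pi_V(\sigma_1) \by{} \cdots \by{} \Pi_V(\sigma_\ell)$ of $M_G^k$, because the transition $t(\step{\sigma_{i-1}}{\sigma_i})$ lies in $\delta$ exactly when $\Pi_V(\sigma_{i-1}), \Pi_V(\sigma_i) \in Q$, i.e.\ when both endpoints carry at most $k$ variables. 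Such a run starts in $q_0 = \Pi_V(S)$, ends in $q_f = \Pi_V(\sigma_\ell) = \vec{0}$ (as $\sigma_\ell \in T^*$), and spells a word whose Parikh image is $\Pi_T(\sigma_\ell)$. Thus the theorem reduces to matching, in Parikh image, the terminal words produced by $G$ with those produced by bounded-width derivations.

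For the inclusion $\Pi_T(L(M_G^{n\dg+1})) \subseteq \Pi_T(L(G))$ --- which in fact holds for every $k$ --- I would lift an accepting run back to a genuine derivation. Given a run $q_0 \by{w_1} q_1 \by{w_2} \cdots \by{w_\ell} q_f$, I build sentential forms $\sigma_0 = S, \sigma_1, \ldots, \sigma_\ell$ with $\Pi_V(\sigma_i) = q_i$ and $\Pi_T(\sigma_i) = \Pi_T(w_1 \cdots w_i)$ by induction: if the $i$-th transition arises from a step using a production $A \to \gamma$, then $q_{i-1}$ has a positive $A$-component, so $\sigma_{i-1}$ contains some occurrence of $A$, and applying $A \to \gamma$ to it yields $\sigma_i$ with the required images. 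The key point is that a state records only the \emph{multiset} of variables, so any occurrence of $A$ will do. At the end $\sigma_\ell \in T^*$ and $S \Rightarrow^* \sigma_\ell$, whence $\Pi_T(w_1 \cdots w_\ell) = \Pi_T(\sigma_\ell) \in \Pi_T(L(G))$.

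The hard inclusion is $\Pi_T(L(G)) \subseteq \Pi_T(L(M_G^{n\dg+1}))$, and by the bridge it follows from a \emph{Width Lemma}: every $w \in L(G)$ admits a derivation $S \Rightarrow^* w'$ with $\Pi_T(w') = \Pi_T(w)$ in which no sentential form has more than $n\dg + 1$ variable occurrences. I would start from an arbitrary derivation tree for $w$ and exploit the same interchangeability: since a state only counts variables, occurrences of one and the same variable are fully interchangeable, so I am free both to reorder the applications of the productions of the tree (a node being expanded only after its parent) and, when a production consumes an $A$, to let it consume any currently available $A$. This turns the Width Lemma into a scheduling problem --- order the productions of the tree so as to minimise the maximal number of simultaneously pending (generated but not yet expanded) variables --- and the claim is that $n\dg+1$ pending variables always suffice.

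The main obstacle is this scheduling bound itself. The engine is a pigeonhole observation: each production raises the variable count by at most $\dg$, since its right-hand side carries at most $\dg+1$ variables, and the moment a sentential form held $n\dg+2$ variables over only $n$ types, some type $B$ would occur at least $\dg+1$ times. I would use this to guide the schedule towards the invariant that at most one variable type occurs more than $\dg$ times and that type occurs at most $\dg+1$ times, which sums to at most $(n-1)\dg + (\dg+1) = n\dg+1$ pending variables throughout. The delicate part is showing that the required next expansion is always available --- expanding an over-represented type before fresh copies of other types are created, and using interchangeability to choose which occurrences to expand --- so that the invariant is never violated; this exchange argument is the crux. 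The degenerate cases $\dg = 0$ (regular grammars, width $1$) and the start from the single variable $S$ provide the base cases and a sanity check on the bound.
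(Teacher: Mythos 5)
Your overall architecture coincides with the paper's: the easy inclusion $L(M_G^k)\subseteq_\Pi L(G)$ by lifting an accepting run back to a derivation (the paper's Proposition~\ref{prop:firstinclusion}), the ``bridge'' sending derivations of index $k$ to runs of $M_G^k$ (the paper's Lemma~\ref{lem:deriv2parikhnfa}), and the reduction of the hard inclusion to your Width Lemma, which is exactly the paper's Collapse Lemma $L(G)\subseteq_\Pi L_{n\dg+1}(G)$. Those parts are sound. The genuine gap is the Width Lemma itself, which is the mathematical heart of the theorem and for which you give only a plan whose crux you explicitly leave open. Two concrete problems. First, as long as you retain the constraint that a node is expanded only after its parent in a \emph{fixed} parse tree, the bound is false: for $A\to AA\mid a$ (so $n=1$, $\dg=1$, bound $2$) every linearization of the complete binary parse tree of height $h$ must at some moment hold $h+1$ pending variables, so no reordering of a fixed tree reaches index $n\dg+1$. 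One must actually change the tree while preserving the multiset of productions (hence the Parikh image); your ``interchangeability'' remark points in that direction, but once occurrences are interchangeable the parent-before-child constraint loses its meaning, and the real problem becomes: given the multiset of productions of a parse tree, exhibit a valid complete derivation of index $n\dg+1$ using that multiset.

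Second, for that problem your invariant (at most one variable type pending more than $\dg$ times, that one at most $\dg+1$ times) is asserted, not proved, and a local greedy exchange does not obviously maintain it: a scheduler can be forced above the bound, or deadlock with pending variables that no remaining production can consume, unless it plans globally when to descend via terminal productions. The pigeonhole you invoke (a form with $n\dg+2$ variables has some type occurring at least $\dg+1$ times) only describes the failure state; it does not produce a schedule. The paper resolves exactly this difficulty with two nontrivial lemmas: a tree-surgery ``compactification'' showing every parse tree is Parikh-equivalent to one of dimension at most $n$ (Lemma~\ref{lem:dimension}, whose well-definedness and termination require care, and whose pigeonhole is applied to root-to-leaf paths of subtrees, not to sentential forms), and a scheduling lemma showing dimension-$k$ trees admit derivations of index $k\dg+1$ (Lemma~\ref{lem:derivation}). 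Until your exchange argument is supplied and shown to avoid deadlock, the proof is incomplete precisely where the theorem is hard.
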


For the grammar of Figure \ref{fig:example} we have $n=2$ and $\dg=1$,
and Theorem \ref{thm:main} yields $L(G)=L(M_G^3)$. So
the language of the automaton of the figure has the same
Parikh image as the language of the grammar.

It is easily seen that $M_G^k$ has exactly ${n+k \choose k}$ states.
Using standard properties of binomial coefficients, for $M_G^{n\dg+1}$ and
$\dg \geq 1$ we get an upper bound of $2 \cdot (\dg+1)^n \cdot e^n$ states.
For $\dg \leq 1$ (e.g.\, for grammars in Chomsky normal form),
the automaton~$M_G^{n+1}$ has ${2n+1 \choose n} \leq 2^{2n+1} \in \mathcal{O}(4^n)$ states.
On the other hand, for every
$n\geq 1$ the grammar $G_n$ in Chomsky normal with
productions $\set{A_k \rightarrow A_{k-1} \ A_{k-1} \mid 2\leq k\leq n}\cup\set{A_1 \rightarrow a}$
and axiom $S=A_n$ satisfies $L(G_n)=\set{a^{2^{n-1}}}$, and therefore
the smallest Parikh-equivalent NFA has $2^{n-1}+1$ states.  This shows that our
construction is close to optimal.

\section{The Proof}
Given $L_1,L_2\subseteq T^*$, we write $L_1 =_{\Pi} L_2$ (resp.\ $L_1 \subseteq_{\Pi} L_2$),
 to denote that the Parikh image of~$L_1$ is equal to (resp.\ included in) the Parikh image of~$L_2$.
Also, given $w,w'\in T^*$, we abbreviate $\set{w}=_{\Pi}\set{w'}$ to $w=_{\Pi}w'$.

We fix a context-free grammar $G=(V,T,P,S)$ with $n$ variables and degree~$\dg$.
In terms of the notation we have just introduced, we have to prove $L(G) =_\Pi L(M_G^{n \dg + 1})$.
One inclusion is easy:

\begin{proposition}
For every $k \geq 1$ we have $L(M_G^k) \subseteq_\Pi L(G)$.
\label{prop:firstinclusion}
\end{proposition}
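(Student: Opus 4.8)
The plan is to show that any word accepted by $M_G^k$ is Parikh-equivalent to some word of $L(G)$, by reading off an actual derivation of $G$ from an accepting run of the automaton. The key observation is that the construction of the transition relation $\delta$ mirrors derivation steps of $G$: every transition $t(\step{\alpha}{\beta}) = (\Pi_V(\alpha), \gamma_{/T}, \Pi_V(\beta))$ records the Parikh image of the variables before and after applying a single production, together with the terminals generated by that production. So I would set up an induction that turns a path in the automaton into a sequence of steps in $G$.

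\medskip\noindent\textbf{Main argument.}
First I would take an accepting run of $M_G^k$ on a word $w \in T^*$, say
$q_0 = p_0 \by{u_1} p_1 \by{u_2} \cdots \by{u_\ell} p_\ell = q_f$
with $w = u_1 u_2 \cdots u_\ell$, where each transition $(p_{i-1}, u_i, p_i)$ lies in $\delta$. By definition of $\delta$, each such transition equals $t(\step{\alpha_i}{\beta_i})$ for some step $\step{\alpha_i}{\beta_i}$ of $G$ with $\Pi_V(\alpha_i) = p_{i-1}$, $\Pi_V(\beta_i) = p_i$, and $(\gamma_i)_{/T} = u_i$, where $A_i \to \gamma_i$ is the production underlying that step. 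The idea is then to build, by induction on $i$, a genuine sentential form $\sigma_i$ of $G$ with $S \Rightarrow^* \sigma_i$ such that $\Pi_V(\sigma_i) = p_i$ and $\Pi_T(\sigma_i) = \Pi_T(u_1 \cdots u_i)$. Starting from $\sigma_0 = S$ (so $\Pi_V(\sigma_0) = \Pi_V(S) = q_0$), at each step I use the fact that $p_{i-1} = \Pi_V(\alpha_i)$ records exactly the variable-multiset that the $i$-th transition consumes: since $\Pi_V(\sigma_{i-1}) = p_{i-1}$ equals the variable-multiset of $\alpha_i$, the sentential form $\sigma_{i-1}$ contains at least one occurrence of the nonterminal $A_i$ rewritten in that step, so I may legitimately apply $A_i \to \gamma_i$ to $\sigma_{i-1}$ to obtain $\sigma_i$. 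This application adds $\Pi_T(\gamma_i) = \Pi_T(u_i)$ to the terminal content and changes the variable-multiset from $p_{i-1}$ to $p_i$ exactly as the transition prescribes.

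\medskip\noindent\textbf{Conclusion.}
At the end of the run $p_\ell = q_f = (0,\dots,0)$, so $\Pi_V(\sigma_\ell) = \vec 0$, meaning $\sigma_\ell$ contains no variables and is therefore a terminal word with $S \Rightarrow^* \sigma_\ell \in L(G)$. By the invariant, $\Pi_T(\sigma_\ell) = \Pi_T(w)$, which gives $w =_\Pi \sigma_\ell$ and hence $w$ lies in the Parikh image of $L(G)$. Since $w$ was an arbitrary word of $L(M_G^k)$, this establishes $L(M_G^k) \subseteq_\Pi L(G)$.

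\medskip\noindent\textbf{Main obstacle.}
The only delicate point is the inductive step, specifically the claim that $\Pi_V(\sigma_{i-1}) = p_{i-1}$ guarantees that $A_i$ actually occurs in $\sigma_{i-1}$ so that the production can be applied. This follows because a transition is included in $\delta$ only when it comes from an \emph{honest} step $\step{\alpha_i}{\beta_i}$ with $\Pi_V(\alpha_i) = p_{i-1}$, and $A_i \to \gamma_i$ is the production appearing in $\alpha_i = \alpha_{i,1} A_i \alpha_{i,2}$; since $\sigma_{i-1}$ has the same variable Parikh image $p_{i-1}$ as $\alpha_i$, it contains the same multiset of variables and in particular at least one $A_i$. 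I would make sure the invariant is phrased in terms of the variable \emph{multiset} (equivalently $\Pi_V$) rather than the exact word, since the automaton only tracks multiplicities and the particular arrangement of variables in $\sigma_{i-1}$ is irrelevant to applicability of the production.
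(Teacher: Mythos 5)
Your proposal is correct and follows essentially the same route as the paper's proof: an induction on the length of the run that maintains a sentential form whose variable Parikh image equals the current state and whose terminal Parikh image equals that of the input read so far, with the same key observation that equality of variable Parikh images guarantees the rewritten nonterminal actually occurs in the sentential form. No gaps.
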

\begin{proof}
Let $k\geq 1$ arbitrary, and let $q_0 \by{\sigma} q$ be a run of $M_G^k$
on the word $\sigma \in T^*$. We first claim that there exists a step sequence
$S \Rightarrow^*\alpha$ satisfying $\Pi_V(\alpha) = q$ and
$\Pi_T(\alpha)=\Pi_T(\sigma)$.  The proof is by induction on the length $\ell$
of $q_0 \by{\sigma} q$. If $\ell=0$, then $\sigma = \varepsilon$, and we
choose $\alpha = S$, which satisfies $\Pi_V(S) = q_0$ and $\Pi_T(S) =
(0,\ldots,0) = \Pi_T(\varepsilon)$. If $\ell > 0$, then let $\sigma = \sigma'
\gamma$ and $q_0 \by{\sigma'} q' \by{\gamma} q$.  By induction
hypothesis 
there is a step sequence $S
\Rightarrow^*\alpha'$ satisfying $\Pi_V(\alpha') = q'$ and
$\Pi_T(\alpha')=\Pi_T(\sigma')$. Moreover, since $q' \by{\gamma} q$
is a transition of $M_G^k$, there is a production $A \rightarrow \gamma'$ and a
step $\step{\alpha_1 A \alpha_2}{\alpha_1 \gamma \alpha_2}$ such that
$\Pi_V(\alpha_1A\alpha_2) = q'$, $\Pi_V(\alpha_1\gamma'\alpha_2) =
q$ and $\gamma'_{/T}=\gamma$. Since $\Pi_V(\alpha') = q' =
\Pi_V(\alpha_1A\alpha_2)$, $\alpha'$ contains at least one occurrence of $A$,
i.e, $\alpha'=\alpha_1'A\alpha_2'$ for some $\alpha_1',\alpha_2'$. We choose
$\alpha = \alpha_1'\gamma'\alpha_2'$, and get $\Pi_V(\alpha)=\Pi_V(\alpha_1'\gamma'\alpha_2')
= \Pi_V(\alpha_1'A\alpha_2') - \Pi_V(A) + \Pi_V(\gamma') =\Pi_V(\alpha')-
\Pi_V(A) + \Pi_V(\gamma') = \Pi_V(\alpha_1 A \alpha_2)- \Pi_V(A) +
\Pi_V(\gamma')=\Pi_V(\alpha_1\gamma'\alpha_2) = q$.  Also
$\Pi_{T}(\alpha)=\Pi_T(\alpha_1'\gamma'\alpha_2')=\Pi_T(\alpha_1'A\alpha_2')+\Pi_T(\gamma')=
\Pi_T(\alpha') + \Pi_T(\gamma') = \Pi_T(\sigma') + \Pi_T(\gamma')
=\Pi_T(\sigma') + \Pi_T(\gamma)=\Pi_T(\sigma)$. This concludes the proof of the
claim.

Now, let $\sigma$ be an arbitrary word with $\sigma \in L(M_G^k)$.
Then there is a run $q_0 \by{\sigma} \Pi_V(\varepsilon)$.
By the claim there exists a step sequence $S \Rightarrow^*\alpha$ satisfying
 $\Pi_V(\alpha) = (0,\ldots,0)$ and $\Pi_T(\alpha)=\Pi_T(\sigma)$.
So $\alpha \in T^*$, and hence $\alpha \in L(G)$.
Since $\Pi_T(\alpha)=\Pi_T(\sigma)$ we have $\alpha =_\Pi \sigma$, and we are done.\qed
\end{proof}

The proof of the second inclusion $L(G) \subseteq_\Pi L(M_G^{n \dg +1})$ is more involved.
To explain its structure we need a definition.

\begin{definition}
A derivation $S=\alpha_0 \Rightarrow \cdots \Rightarrow \alpha_\ell$ of $G$
has index $k$ if for every $i\in\set{0,\ldots,\ell}$, the word $(\alpha_i)_{/V}$ has length
at most $k$. The set of words derivable through derivations of index
$k$ is denoted by $L_k(G)$.
\end{definition}
\noindent For example, the derivation $A_1 \Rightarrow A_1A_2 \Rightarrow A_1cA_1
\Rightarrow A_1ca \Rightarrow aca$ has index two.
Clearly, we have $L_1(G) \subseteq L_2(G) \subseteq L_3(G) \ldots$ and $L(G) = \bigcup_{k\geq 1} L_k(G)$.

The proof of  $L(G) \subseteq_\Pi L(M_G^{n\dg+1})$ is divided into two parts.
We first prove the {\em Collapse Lemma}, Lemma \ref{lem:collapse}, stating
that $L(G) \subseteq_\Pi L_{n\dg+1}(G)$, and then we prove, in Lemma~\ref{lem:deriv2parikhnfa},
that $L_k(G) \subseteq_\Pi L(M_G^k)$ holds for every $k \geq 1$.
A similar result has been proved in~\cite{EKL09:newtProgAn} with different notation and in a different context.
We reformulate its proof here for the reader interested in a self-contained proof.

\paragraph{The Collapse Lemma}  We need a few preliminaries.
We assume the reader is familiar with the
fact that every derivation can be parsed into a
{\em parse tree} \cite[Chapter 5]{HMU06}, whose  {\em yield}
is the word produced by the derivation. We denote the yield of a parse
tree $t$ by $Y(t)$, and the set of yields of a set $\mathcal{T}$ of trees by $Y(\mathcal{T})$.
Figure~\ref{fig:dtree} shows the parse
tree of the derivation $A_1 \Rightarrow A_1A_2 \Rightarrow aA_2 \Rightarrow abA_1 \Rightarrow aba$.
We introduce the notion of dimension of a parse tree.
\begin{figure}[h]
    \centering
    \scalebox{0.4}{\input{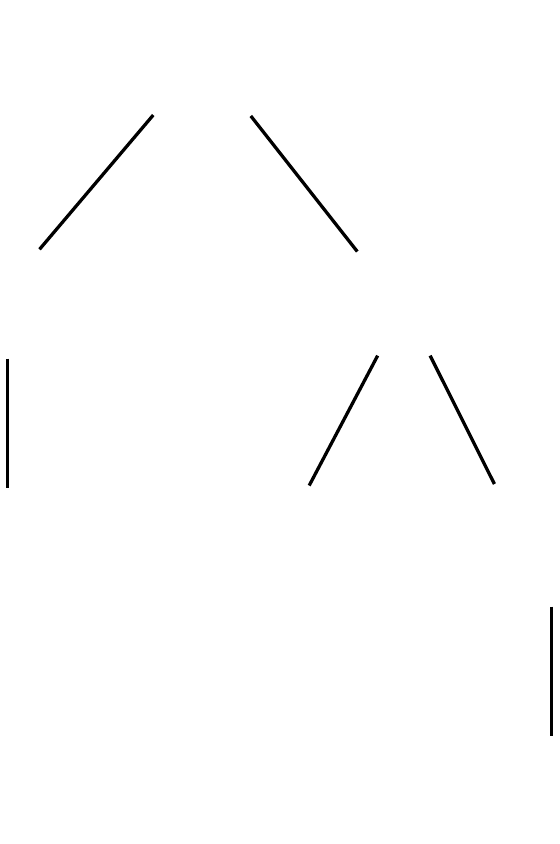_t}}
        \caption{A parse tree of $A_1 \rightarrow A_1A_2 \vert a,\ A_2 \rightarrow bA_2aA_2 \vert cA_1$ with $S = A_1$}

\label{fig:dtree}

\end{figure}

\begin{definition} Let $t$ be a parse tree. A {\em child} of $t$ is a subtree of $t$ whose root
is a child of the root of $t$. A child of~$t$ is called \emph{proper} if its root is not a leaf, i.e.,
if it is labeled with a variable. The {\em dimension} $d(t)$ of a parse tree $t$ is inductively defined 
as follows. If $t$ has no proper children, then $d(t)=0$.
Otherwise, let $t_1,t_2,\ldots,t_r$ be the proper children of $t$ sorted such that $d(t_1)\ge d(t_2) \ge \ldots \ge d(t_r)$.
Then
 \[d(t)=\begin{cases}
          d(t_1) &\text{if $r=1$ or $d(t_1) > d(t_2)$} \\ d(t_1) + 1 &\text{if $d(t_1) = d(t_2)$.}
        \end{cases}
 \]
The set of parse trees of $G$ of dimension $k$ is denoted by
$\mathcal{T}^{(k)}$, and the set of all parse trees of $G$ by $\mathcal{T}$.
%
\end{definition}
\noindent The parse tree of Fig.~\ref{fig:dtree} has  two
children, both of them proper. It has dimension 1 and height 3.
Observe also the following fact, which can be easily proved by induction.
\begin{fact}
 Denote by~$h(t)$ the height of a tree~$t$.
 Then $h(t) > d(t)$.
\label{fact:height}
\end{fact}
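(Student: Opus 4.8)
The plan is to prove the statement by structural induction on the parse tree $t$, following exactly the inductive definition of the dimension $d(t)$ and using the convention (consistent with Figure~\ref{fig:dtree}, where the tree has dimension $1$ and height $3$) that the height of a tree is the number of edges on a longest root-to-leaf path, so that a single leaf has height $0$.

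For the base case, suppose $t$ has no proper children, so that $d(t)=0$ by definition. Since the root of a parse tree is labeled by a variable, it is expanded by some production and therefore $t$ has at least one child. Hence $h(t)\ge 1 > 0 = d(t)$, as required.

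For the inductive step, let $t_1,\ldots,t_r$ with $r\ge 1$ be the proper children of $t$, sorted so that $d(t_1)\ge d(t_2)\ge\cdots\ge d(t_r)$. The induction hypothesis gives $h(t_i)>d(t_i)$, i.e.\ $h(t_i)\ge d(t_i)+1$, for every $i$. Because $t_1$ is a child of $t$, we have $h(t)\ge 1+h(t_1)\ge d(t_1)+2$. It then suffices to bound $d(t)$ via the two cases of its definition: if $r=1$ or $d(t_1)>d(t_2)$, then $d(t)=d(t_1)<d(t_1)+2\le h(t)$; and if $d(t_1)=d(t_2)$, then $d(t)=d(t_1)+1<d(t_1)+2\le h(t)$. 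In both cases $h(t)>d(t)$, which closes the induction.

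The only point requiring care is the base case: one must note that the root of any parse tree is a nonterminal and hence has at least one child, so its height is strictly positive even when it has no proper children (a single-node tree would violate the inequality, but no such tree arises as a parse tree). The inductive step is otherwise routine; the key observation is that it is enough to use the bound coming from the maximum-dimension child $t_1$ — even though the maximum height may be attained by a different child — because the $+1$ gained when passing to the parent exactly absorbs the possible $+1$ increment in the dimension in the case $d(t_1)=d(t_2)$.
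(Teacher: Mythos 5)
Your proof is correct and matches the approach the paper intends: the paper merely asserts that the fact "can be easily proved by induction," and your structural induction, with the right edge-counting convention for height and the observation that a parse tree's root always has at least one child, is exactly that easy induction carried out in full.
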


For the proof of the collapse lemma, $L(G) \subseteq_\Pi L_{n\dg+1}(G)$,
 observe first that,
since every word in $L(G)$ is the yield of some parse tree, we have
$L(G) = Y(\mathcal{T})$, and so it suffices to show $Y(\mathcal{T}) \subseteq_\Pi L_{n\dg+1}(G)$.
The proof is divided into two parts.
We first show $Y(\mathcal{T}) \subseteq_\Pi \bigcup_{i=0}^n Y(\mathcal{T}^{(i)})$ in Lemma \ref{lem:dimension},
and then we show $\bigcup_{i=0}^n Y(\mathcal{T}^{(i)}) \subseteq L_{n\dg+1}(G)$ in
Lemma \ref{lem:derivation}. Actually, the latter proves the stronger result
that parse trees of dimension $k\geq 0$ have derivations of index $k\dg+1$, i.e.,
$Y(\mathcal{T}^{(k)}) \subseteq  L_{k\dg+1}(G)$ for all $k \leq 0$.

\begin{lemma}
    $Y(\mathcal{T}) \subseteq_\Pi \bigcup_{i=0}^nY(\mathcal{T}^{(i)})$.
    \label{lem:dimension}
\end{lemma}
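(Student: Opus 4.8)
The plan is to prove the per-tree version of the statement: for every parse tree $t$ there is a parse tree $t'$ with $d(t')\le n$ and $Y(t)=_\Pi Y(t')$; taking the union over all $t$ then yields $Y(\mathcal{T})\subseteq_\Pi\bigcup_{i=0}^n Y(\mathcal{T}^{(i)})$. The structural ingredient I would isolate first concerns the shape of a tree of dimension $d$. Following from the root the child of maximum dimension (breaking ties arbitrarily) traces a path, the \emph{spine}; unfolding the recursive definition of $d(\cdot)$ shows that along the spine the dimension is non-increasing and drops by exactly one precisely at the nodes where the two largest children dimensions are tied. Hence a tree of dimension $d$ carries exactly $d$ such \emph{branching} nodes $v_1\prec v_2\prec\cdots\prec v_d$ on its spine, with $d(v_j)=d-j+1$, and at each $v_j$ an off-spine sibling subtree of dimension $d(v_j)-1$. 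I would prove this auxiliary claim by induction on the height of $t$ directly from the definition.

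I would then argue by induction on $d(t)$. If $d(t)\le n$ we are done. If $d:=d(t)>n$, the spine carries $d>n=|V|$ branching nodes, so by the pigeonhole principle two of them, $v_i\prec v_j$, carry the same variable $A$, and $d(v_i)>d(v_j)$. Writing $t=E[T_{v_i}]$ and $T_{v_i}=C[T_{v_j}]$, where $E$ is the context from the root to $v_i$ and $C$ is the $A$-to-$A$ context from $v_i$ to $v_j$, I would exploit the defining feature of Parikh images: the order in which terminals are produced is irrelevant, and in fact $\Pi_T(Y(t))$ depends only on the \emph{multiset} of productions occurring in $t$, since each application of a production $A\to\gamma$ contributes exactly $\Pi_T(\gamma)$ to the terminal Parikh image. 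Any rearrangement of the tree that preserves this multiset therefore preserves $Y(\cdot)$ up to $=_\Pi$. The surgery uses the repeated occurrence of $A$ to relocate the context $C$: I collapse $T_{v_i}$ to the smaller subtree $T_{v_j}$ (destroying the dimension tie at the highest branching node $v_i$) and reinsert $C$ at an $A$-position of small dimension, so that no branch of dimension $d$ survives. The resulting $t'$ satisfies $Y(t')=_\Pi Y(t)$ and $d(t')<d(t)$, and the induction hypothesis applied to $t'$ finishes the proof.

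The main obstacle is exactly this last surgery: I must guarantee that destroying a maximal tie does not silently recreate a tie of the same dimension higher up the spine, i.e.\ that the rearrangement strictly decreases a well-founded measure rather than merely relocating the difficulty. This is delicate because reattaching $C$ raises the dimension wherever it is inserted; the point I would press is that $C$ itself has dimension $d(v_i)-1<d$ (its largest off-spine branch lives strictly below the top tie), so inserting it at a sufficiently low-dimensional $A$-node keeps the affected branch below $d$. Making this choice canonical, and verifying that the rearranged tree is a genuine parse tree whose yield is $Y(t)$ up to $=_\Pi$, is the technical heart of the lemma and where I expect essentially all of the effort to go; the reduction to the per-tree statement and the pigeonhole step surrounding it are routine. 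This is also the part that mirrors the argument of~\cite{EKL09:newtProgAn}.
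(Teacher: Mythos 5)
Your overall strategy---relocating an $A$-to-$A$ context $C$ between two equally labelled nodes on a path, justified by the fact that the Parikh image of the yield depends only on the multiset of productions used---is the same surgery that underlies the paper's proof, and your spine analysis (exactly $d$ tie nodes of strictly decreasing dimension, pigeonhole once $d>n$) is correct. But the two points you defer as ``technical effort'' are where the proof actually lives, and as stated the plan does not go through. First, after collapsing $T_{v_i}$ to $T_{v_j}$ you must reinsert $C$ at an $A$-labelled node \emph{other than} the root of the relocated $T_{v_j}$, since reinserting there is the identity operation; nothing in your setup guarantees such a node exists ($A$ may occur only at $v_i$, at $v_j$, and inside $C$). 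Second, the claimed strict decrease $d(t')<d(t)$ for a single surgery is false in general, because dimension $d$ can be witnessed by ties your surgery never touches: if some ancestor $u$ of $v_i$ on the spine has three children of equal (maximal) dimension $e$, then lowering the one containing $v_i$ leaves the other two tied, so $d(u)=e+1$ and hence $d(t)$ are unchanged no matter where $C$ is reinserted. So an induction on $d(t)$ with one relocation per step cannot close; you must iterate, and you then need a well-founded measure other than the dimension itself.

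The paper resolves exactly these two issues by aiming the surgery differently. It first recursively compactifies all proper children and then picks \emph{two distinct} children: $t_y$ of maximal dimension, in which the repeated variable is found via $K(t_y)=d(t_y)<h(t_y)$ (Fact~\ref{fact:height}), and $t_x$ of maximal variable count, which is shown to contain \emph{every} variable occurring in $t$---this is what guarantees a host for the excised context. Termination is then proved not by a dimension decrease but by node counting: the target index $x$ is invariant across iterations, $t_x$ only gains nodes, and every other proper child only loses them; preserving the node count and the variable set (the ``$\Omega$-equivalence'' invariant) is essential to make both steps work. To repair your version you would have to reinvent both ingredients---a guaranteed reinsertion point and a measure other than $d$---so the proposal has the right core idea but a genuine gap at its acknowledged technical heart.
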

\begin{proof}
In this proof we write $t=t_1 \cdot t_2$ to denote that $t_1$ is a parse
tree except that exactly one leaf $\ell$ is labelled by a variable, say $A$,
instead of a terminal; the tree $t_2$ is a parse tree with root $A$; and
the tree $t$ is obtained from $t_1$ and $t_2$ by replacing the leaf~$\ell$
of $t_1$ by the tree $t_2$. Figure~\ref{fig:dtree_decomp} shows an example.

\begin{figure}[h]
    \centering
    \scalebox{0.4}{\input{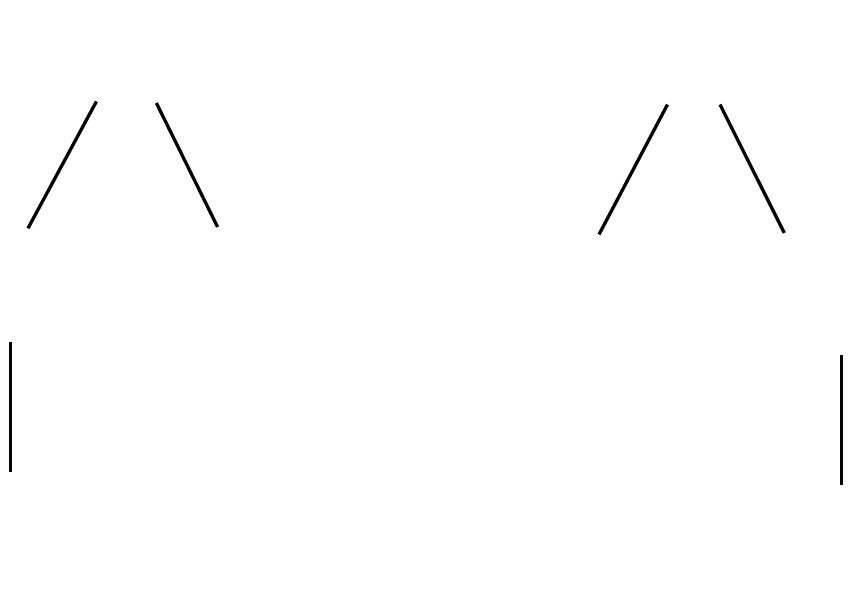_t}}
    \caption{A decomposition $t_1$, $t_2$ such that $t=t_1\cdot t_2$ is the parse
        tree of Fig.~\ref{fig:dtree}}
    \label{fig:dtree_decomp}
\end{figure}

In the rest of the proof we abbreviate {\em parse tree} to {\em tree}.
We need to prove that for every tree $t$ there exists a
tree $t'$ such that $Y(t)=_{\Pi}Y(t')$ and $d(t') \leq n$. We shall
prove the stronger result that moreover $t$ and $t'$ have the 
same number of nodes, and the set of variables appearing in $t$ 
and $t'$ coincide.

Say that two trees $t, t'$ are {\em $\Omega$-equivalent} if 
they have the same number of nodes, the sets of variables appearing in $t$ 
and $t'$ coincide, and $Y(t)=_{\Pi}Y(t')$ holds. Say further that a tree 
$t$ is {\em compact} if $d(t) \le K(t)$, where
$K(t)$ denotes the number of variables that appear in $t$.
Since $K(t) \leq n$ for every $t$, it suffices to show that
every tree is $\Omega$-equivalent to a compact tree.
We describe a recursive ``compactification procedure'' ${\it Compact}(t)$ that 
transforms a tree $t$ into an $\Omega$-equivalent compact tree, and prove that
it is well-defined and correct. By well-defined we mean that some assumptions
made by the procedure about the existence of some objects indeed hold.

${\it Compact}(t)$ consists of the following steps: 
\begin{itemize}
\item[(1)] If $t$ is compact then return $t$ and terminate.
\item[(2)] If $t$ is not compact then
\begin{itemize}
\item[(2.1)] Let $t_1, \ldots, t_r$ be the proper children of $t$, $r \geq 1$.
\item[(2.2)] For every $1 \leq i \leq r$:\ \  $t_i:= {\it Compact}(t_i)$. \\
             (I.e., replace in $t$ the subtree $t_i$ by the result of compactifying $t_i$).\\
             Let $x$ be the smallest index $1 \leq x \leq r$ such that $K(t_x) = \max_i K(t_i)$.
\item[(2.3)] Choose an index $y \neq x$ such that $d(t_y) = \max_i d(t_i)$.
\item[(2.4)] Choose subtrees $t_x^a, t_x^b$ of $t_x$ and subtrees $t_y^a, t_y^b, t_y^c$ of $t_y$ such that 
\begin{itemize}
\item[(i)] $t_x = t^a_x \cdot t^b_x$ and $t_y = t^a_y \cdot( t^b_y \cdot t^c_y )$; and
\item[(ii)] the roots of $t^b_x, t^b_y$ and $t^c_y$ are labelled by the same variable.
\end{itemize}
\item[(2.5)] $t_x : = t^a_x\cdot(t^b_y\cdot t^b_x) \;\; ; \;\; t_y:=t^a_y\cdot t^c_y$.\\
(Loosely speaking, remove $t^b_y$ from $t_y$ and insert it into $t_x$.)
\item[(2.6)] Goto (1). 
\end{itemize}  
\end{itemize} 

We first prove that the assumptions at lines (2.1), (2.3), and (2.4) about
the existence of certain subtrees hold.\\

\noindent (2.1) {\it If $t$ is not compact, then $t$ has at least one proper child.}\\[0.1cm]
Assume that $t$ has no proper child. Then, by the definitions of
dimension and $K(t)$,  we have $d(t) = 0 \leq K(t)$, and so
$t$ is compact.\\

\noindent (2.3) {\it Assume that $t$ is not compact, has at least one proper child, and all its proper children 
are compact. Let $x$ be the smallest index $1 \leq x \leq r$ such that $K(t_x) = \max_i K(t_i)$. There 
there exists an index $y \neq x$ such that $d(t_y) = \max_i d(t_i)$.}\\[0.1cm]
Let $1 \leq y \leq r$ (where for the moment possibly $x = y$) be an index 
such that $d(t_y) = \max_id(t_i)$. We have
\begin{equation} \label{eq:new-sequence-1}
\begin{aligned}
 d(t) 
 & \le d(t_y) + 1 && \text{(by definition of dimension and of~$y$)} \\
 & \le K(t_y) + 1 && \text{(as $t_y$ is compact)} \\
 & \le K(t_x) + 1 && \text{(by definition of~$x$)} \\
 & \le K(t) + 1   && \text{(as $t_x$ is a child of~$t$)} \\
 & \le d(t)       && \text{(as $t$ is not compact),}
\end{aligned}
\end{equation}
so all inequalities in~\eqref{eq:new-sequence-1} are in fact equalities.
In particular, we have $d(t) = d(t_y) + 1$ and so, by the definitions of dimension and 
of~$y$, there exists $y' \ne y$ such that $d(t_{y'}) = d(t_y)$. Hence $x \ne y$ or $x \ne y'$,
and w.l.o.g.\ we can choose $y$ such that $y \ne x$.\\

\noindent (2.4) {\it Assume that $t$ is not compact, all its proper children are compact, and it has 
two distinct proper children $t_x, t_y$ such that $K(t_x) = \max_i K(t_i)$ and $d(t_y) = \max_i d(t_i)$.
There exist subtrees $t_x^a, t_x^b$ of $t_x$ and subtrees $ t_y^a, t_y^b, t_y^c$ of
$t_y$ satisfying conditions (i) and (ii).} \\[0.1cm]
By the equalities in \eqref{eq:new-sequence-1} we have $K(t_y) = d(t_y)$.
By Fact~\ref{fact:height} we have $d(t_y) < h(t_y)$. So $K(t_y) < h(t_y)$, and therefore some
path of $t_y$ from the root to a leaf visits at least two nodes labelled
with the same variable, say $A$. So $t_y$ can be factored into $t^a_y \cdot( t^b_y \cdot t^c_y )$ 
such that the roots of $t^b_y$ and $t^c_y$ are labelled by~$A$. Since by the equalities in
\eqref{eq:new-sequence-1} we also have $K(t) = K(t_x)$, every variable that appears in $t$ appears also in $t_x$, and so $t_x$ contains a node labelled by $A$. So $t_x$ can be factored
into $t_x = t^a_x \cdot t^b_x$ with the root of $t^b_x$ labelled by~$A$.\\

This concludes the proof that the procedure is well-defined. It remains to show that
it terminates and returns an $\Omega$-equivalent compact tree. We start by proving the following
lemma:\\

\noindent {\it If ${\it Compact}(t)$ terminates and returns a tree $t'$, then  
$t$ and $t'$ are $\Omega$-equivalent.} \\[0.1cm]
We proceed by induction on the number of calls to ${\it Compact}$
during the execution of ${\it Compact}(t)$. If ${\it Compact}$ is called only once, then
only line (1) is executed, $t$ is compact, no step modifies $t$, and we are done. 
Assume now that ${\it Compact}$ is called more than once.
The only lines that modify $t$ are (2.2) and (2.5). Consider first line (2.2.). 
By induction hypothesis, each call to ${\it Compact}(t_i)$ during the execution of ${\it Compact}(t)$
returns a compact tree $t_i'$ that is $\Omega$-equivalent to $t_i$. Let $t_1$ and $t_2$ be the 
values of $t$ before and after the execution of $t_i := {\it Compact}(t_i)$. Then $t_2$
is the result of replacing $t_i$ by $t_i'$ in $t_1$. By the definition of 
$\Omega$-equivalence, and since $t_i'$ is $\Omega$-equivalent to $t_i$, we get that 
$t_2$ is $\Omega$-equivalent to $t_1$. Consider now line (2.5), and let $t_1$ and $t_2$
 be the values of $t$ before and after the execution of $t_x : = t^a_x\cdot(t^b_y\cdot t^b_x)$ 
followed by the execution of $t_y:=t^a_y\cdot t^c_y$.
Since the subtree $t^b_y$ that is added to $t_x$ is subsequently removed 
from $t_y$, the Parikh-image of $Y(t)$, the number of nodes of $t$, 
and the set of variables appearing in $t$ do not change. This completes the proof of the lemma.\\

The lemma shows in particular that if the procedure terminates, then it returns an 
$\Omega$-equivalent tree. So it only remains to prove that the procedure always terminates.
Assume there is a tree $t$ such that ${\it Compact}(t)$ does not terminate. W.l.o.g.\ we 
further assume that $t$ has a minimal number of nodes. In this case all the calls to
line (2.2) terminate, and so the execution contains infinitely many steps that 
do not belong to any deeper call in the call tree, and in particular infinitely many executions
of the block (2.3)-(2.5). We claim that in all executions of this block the index 
$x$ {\em has the same value}. For this, observe first that, by the lemma, 
the execution of line (2.2) does not change the number of nodes or the set of variables occurring in 
each of $t_1, \ldots, t_r$.
In particular, it preserves the value of $K(t_1), \ldots, K(t_r)$.
Observe further that each time line (2.5) is executed, the procedure adds nodes
to $t_x$, and either does not change or removes nodes from any 
other proper children of $t$. 
In particular, the value of $K(t_x)$ does not decrease, and for every $i \neq x$ the value
of $K(t_i)$ does not increase. So at the next execution of the block the index $x$ of the former
execution is still the smallest index satisfying $K(t_x) = \max_i K(t_i)$. Now, since
$x$ has the same value at every execution of the block, each execution strictly decreases 
the number of nodes of some proper child $t_y$ different from $t_x$, and only increases the
number of nodes of $t_x$. This contradicts the 
fact that all proper children of $t$ have a finite number of nodes.
\qed
\end{proof}

\begin{lemma}
    For every $k\geq 0\colon Y(\mathcal{T}^{(k)}) \subseteq L_{k\dg+1}(G)$.
\label{lem:derivation}
\end{lemma}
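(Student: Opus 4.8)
The plan is to reinterpret the index of a derivation as a traversal cost on the parse tree. Fix $t \in \mathcal{T}^{(k)}$, so $d(t)=k$, and recall that a derivation realizing $t$ is exactly a choice of order in which to expand the variable-labelled (internal) nodes of $t$: expanding a node applies its production and replaces it by its children, the variable children becoming new variable occurrences in the sentential form. Consequently the number of variables in each sentential form equals the number of nodes that are \emph{pending}, i.e.\ already introduced (their parent has been expanded, and the root is introduced from the start) but not yet expanded; and the index of the derivation is the maximum number of simultaneously pending nodes over the whole traversal. Since $Y(t)$ is derivable by any such order, it suffices to exhibit, for every $t$, an expansion order whose peak number of pending nodes is at most $d(t)\dg + 1$. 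I would prove this by induction on $t$.

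For the inductive step, let $t$ have proper children $t_1,\dots,t_p$, sorted so that $d(t_1)\ge\cdots\ge d(t_p)$, and write $d_i=d(t_i)$; since each proper child corresponds to a variable on the right-hand side of the root's production, $p\le \dg+1$. The key idea is to expand the root first (opening the $p$ roots of the proper children), and then process the proper children one at a time in order of \emph{increasing} dimension, that is in the order $t_p, t_{p-1},\dots,t_1$, each according to the order supplied by the induction hypothesis. While $t_i$ is being processed, the children $t_{i-1},\dots,t_1$ are still pending (contributing $i-1$ nodes), those already finished contribute nothing, and inside $t_i$ the peak is at most $d_i\dg+1$ by induction. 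Hence the pending count during $t_i$ is at most $(i-1)+(d_i\dg+1)$, and the peak over the entire traversal is at most
\[
 \max_{1\le i\le p}\bigl(d_i\,\dg + i\bigr).
\]

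It then remains to check $\max_{1\le i\le p}(d_i\dg+i)\le d(t)\dg+1$. For $i=1$ this is just $d_1\le d(t)$, which holds by definition of dimension. For $i\ge 2$ it is equivalent to $i-1\le (d(t)-d_i)\dg$. Here the two features of the problem meet: on one hand $i-1\le p-1\le \dg$; on the other hand the dimension recurrence forces $d(t)-d_i\ge d(t)-d_2\ge 1$ for every $i\ge 2$ (if $d_1>d_2$ then $d(t)=d_1>d_2$, and if $d_1=d_2$ then $d(t)=d_1+1>d_2$), so $(d(t)-d_i)\dg\ge \dg\ge i-1$; note that $p\ge2$ forces $\dg\ge1$, so the bound is not vacuous. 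I expect this inequality to be the main obstacle, because the dimension increases by only $1$ when two maximal children tie, yet a node may keep up to $\dg+1$ children pending at once — one must verify that the $\dg$-fold budget gained per unit of dimension exactly absorbs the up-to-$\dg$ pending siblings. The base case is immediate: if $t$ has no proper child then $d(t)=0$ and expanding the root yields only the initial one-variable sentential form, matching $0\cdot\dg+1=1$. Together with the Collapse Lemma this establishes $Y(\mathcal{T}^{(k)})\subseteq L_{k\dg+1}(G)$.
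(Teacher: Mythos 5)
Your proposal is correct and follows essentially the same route as the paper's proof: an induction over the parse tree in which the root is expanded first and the proper children are then derived one after the other in order of increasing dimension, so that while the child of largest dimension is being expanded no siblings remain pending, and the arithmetic $(i-1)+(d_i\dg+1)\le k\dg+1$ (using $p-1\le\dg$ and $d_i\le k-1$ for $i\ge 2$) is exactly the paper's bound on the index of the intermediate step sequences. The only difference is presentational — you phrase the index as a peak count of pending nodes in a traversal, where the paper writes out the step sequences $D_i'$ explicitly.
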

\begin{proof}
\newcommand{\As}[1]{A^{(#1)}}%
In this proof we will use the following notation.
If $D$ is a derivation $\alpha_0 \Rightarrow \cdots \Rightarrow \alpha_\ell$ and $w, w' \in (V \cup T)^*$,
 then we define $w D w'$ to be the step sequence $w \alpha_0 w' \Rightarrow \cdots \Rightarrow w \alpha_\ell w'$.

Let $t$ be a parse tree such that $d(t)=k$.
We show that there is a derivation for~$Y(t)$ of index $k \dg + 1$.
We proceed by induction on the number of non-leaf nodes in~$t$.
In the base case, $t$ has no proper child.
Then we have $k = 0$ and $t$ represents a derivation $S \Rightarrow Y(t)$ of index~$1$.
For the induction step, assume that $t$ has $r \ge 1$ proper children $t_1,\ldots,t_r$
where the root of $t_i$ is assumed to be labeled by $\As{i}$;
i.e., we assume that the topmost level of~$t$ is induced by a rule
$S\rightarrow \gamma_0 \As{1} \gamma_1 \cdots \gamma_{r-1} \As{r}\gamma_r$ for $\gamma_i \in T^\ast$.
Note that $r-1\le \dg$.
By definition of dimension, at most one child $t_i$ has dimension~$k$, while the other children have dimension at most $k-1$.
W.l.o.g.\ assume $d(t_1) \le k$ and $d(t_2), \ldots,d(t_r) \le k-1$.
By induction hypothesis, for all $1 \le i \le r$ there is a derivation~$D_i$ for~$Y(t_i)$
 such that $D_1$ has index $k \dg + 1$, and $D_2, \ldots, D_r$ have index $(k-1) \dg + 1$.
Define, for each $1 \le i \le r$, the step sequence
 \[
  D_i' := \gamma_0 \As{1} \gamma_1 \cdots \gamma_{i-2} \As{i-1} \gamma_{i-1} D_i \gamma_i Y(t_{i+1}) \gamma_{i+1} \cdots \gamma_{r-1} Y(t_r) \gamma_r \,.
 \]
If the notion of index is extended to step sequences in the obvious way, then $D_1'$ has index $k \dg + 1$,
 and for $2 \le i \le r$, the step sequence $D_i'$ has index $(i-1) + (k-1) \dg + 1 \le k \dg + 1$.
By concatenating the step sequences $S \Rightarrow \gamma_0 \As{1} \gamma_1 \cdots \gamma_{r-1} \As{r}\gamma_r$ and
 $D_r, D_{r-1}, \ldots, D_1$ in that order,
 we obtain a derivation for $Y(t)$ of index $k \dg + 1$.
\qed
\end{proof}

Putting Lemma \ref{lem:derivation} and Lemma \ref{lem:dimension} together we obtain:

\begin{lemma}{\bf [Collapse Lemma]} $L(G) \subseteq_\Pi L_{n\dg+1}(G)$.
\label{lem:collapse}
\end{lemma}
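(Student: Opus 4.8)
The plan is to assemble the Collapse Lemma directly from the two preceding lemmas, the only extra ingredient being the monotonicity of the index-bounded languages $L_1(G) \subseteq L_2(G) \subseteq \cdots$ that was observed right after the definition of index. No new combinatorial work is needed; everything reduces to chaining inclusions.

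First I would recall that every word of $L(G)$ is the yield of some parse tree, so $L(G) = Y(\mathcal{T})$. Lemma~\ref{lem:dimension} then gives $Y(\mathcal{T}) \subseteq_\Pi \bigcup_{i=0}^n Y(\mathcal{T}^{(i)})$, which reduces the problem to bounding the yields of trees of dimension at most $n$ (this is where the compactification argument has already done the heavy lifting, capping the relevant dimensions at $K(t) \le n$).

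Next I would apply Lemma~\ref{lem:derivation} to each summand: for every $i$ with $0 \leq i \leq n$ we have $Y(\mathcal{T}^{(i)}) \subseteq L_{i\dg+1}(G)$. Since $i \leq n$ implies $i\dg + 1 \leq n\dg + 1$, the monotonicity of the chain $L_1(G) \subseteq L_2(G) \subseteq \cdots$ yields $L_{i\dg+1}(G) \subseteq L_{n\dg+1}(G)$ for each such $i$. Taking the union over $i$ therefore gives $\bigcup_{i=0}^n Y(\mathcal{T}^{(i)}) \subseteq L_{n\dg+1}(G)$.

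Chaining these inclusions, $L(G) = Y(\mathcal{T}) \subseteq_\Pi \bigcup_{i=0}^n Y(\mathcal{T}^{(i)}) \subseteq L_{n\dg+1}(G)$, which is exactly the claim. I do not expect any genuine obstacle here, as all the difficulty has been absorbed into Lemma~\ref{lem:dimension} and Lemma~\ref{lem:derivation}. The only point requiring a moment's care is that the different per-dimension index bounds $i\dg+1$ must be uniformly absorbed into the single bound $n\dg+1$; this is precisely what the monotonicity of $L_k(G)$ in $k$ provides, so the combination goes through cleanly.
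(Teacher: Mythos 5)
Your proof is correct and follows essentially the same route as the paper: $L(G) = Y(\mathcal{T}) \subseteq_\Pi \bigcup_{i=0}^n Y(\mathcal{T}^{(i)})$ by Lemma~\ref{lem:dimension}, then $\bigcup_{i=0}^n Y(\mathcal{T}^{(i)}) \subseteq L_{n\dg+1}(G)$ by Lemma~\ref{lem:derivation}. Your explicit appeal to the monotonicity $L_{i\dg+1}(G) \subseteq L_{n\dg+1}(G)$ is a detail the paper leaves implicit, but it is exactly the right justification.
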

\begin{proof}
\[\begin{array}{rcll}
L(G) & = & Y(\mathcal{T}) \\
& \subseteq_\Pi &  \bigcup_{i=0}^nY(\mathcal{T}^{(i)}) & \mbox{(Lemma \ref{lem:dimension})} \\
         & \subseteq &L_{n\dg+1}(G) & \mbox{(Lemma \ref{lem:derivation})}
\end{array}\]
\hfill\qed
\end{proof}

\begin{lemma}
For every $k\geq 1$: $L_k(G) \subseteq_\Pi L(M_G^{k})$.
\label{lem:deriv2parikhnfa}
\end{lemma}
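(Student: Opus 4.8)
The plan is to turn a witnessing derivation of $w$ directly into an accepting run of the automaton. Concretely, I would fix an arbitrary $w \in L_k(G)$ together with a derivation $S = \alpha_0 \Rightarrow \alpha_1 \Rightarrow \cdots \Rightarrow \alpha_\ell = w$ of index $k$, let $\gamma^{(i)}$ denote the right-hand side of the unique production applied at the $i$-th step, and consider the sequence of associated transitions $t(\step{\alpha_{i-1}}{\alpha_i}) = (\Pi_V(\alpha_{i-1}), \gamma^{(i)}_{/T}, \Pi_V(\alpha_i))$ for $1 \le i \le \ell$. The goal is to show that these transitions form an accepting run of $M_G^k$ reading a word $\sigma$ with $\sigma =_\Pi w$.

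The first step is to verify that each such triple really belongs to $\delta$. By definition of $\delta$ this amounts to checking that $\Pi_V(\alpha_{i-1}), \Pi_V(\alpha_i) \in Q$, and this is exactly where the index hypothesis enters: since the derivation has index $k$ we have $|(\alpha_i)_{/V}| \le k$ for every $i$, and because $\sum_{j=1}^n (\Pi_V(\alpha_i))_j = |(\alpha_i)_{/V}|$, the vector $\Pi_V(\alpha_i)$ satisfies the defining inequality of $Q$. The transitions chain correctly, since the target $\Pi_V(\alpha_i)$ of the $i$-th transition is the source of the $(i{+}1)$-th; the run begins in $\Pi_V(\alpha_0) = \Pi_V(S) = q_0$ and, as $\alpha_\ell = w \in T^*$, ends in $\Pi_V(\alpha_\ell) = (0,\ldots,0) = q_f$, so it is accepting.

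It then remains to compare $w$ with the word $\sigma := \gamma^{(1)}_{/T}\,\gamma^{(2)}_{/T}\cdots\gamma^{(\ell)}_{/T}$ read along this run. On the automaton side, $\Pi_T(\sigma) = \sum_{i=1}^\ell \Pi_T(\gamma^{(i)}_{/T}) = \sum_{i=1}^\ell \Pi_T(\gamma^{(i)})$, using $\Pi_T(\gamma_{/T}) = \Pi_T(\gamma)$. On the grammar side, each step replaces a variable (contributing no terminal) by $\gamma^{(i)}$, so $\Pi_T(\alpha_i) = \Pi_T(\alpha_{i-1}) + \Pi_T(\gamma^{(i)})$; telescoping from $\Pi_T(\alpha_0) = \Pi_T(S) = (0,\ldots,0)$ yields $\Pi_T(w) = \sum_{i=1}^\ell \Pi_T(\gamma^{(i)})$. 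Hence $\Pi_T(\sigma) = \Pi_T(w)$, that is $w =_\Pi \sigma$ with $\sigma \in L(M_G^k)$, which gives $L_k(G) \subseteq_\Pi L(M_G^k)$.

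I do not expect a deep obstacle here: the argument is essentially the mirror image of Proposition~\ref{prop:firstinclusion}. The one genuinely load-bearing point is the translation of the index bound into membership in $Q$ via the identity $\sum_j (\Pi_V(\alpha_i))_j = |(\alpha_i)_{/V}|$, and everything else is routine bookkeeping with Parikh images. The main care point is to keep the two projections apart, since the transitions are labelled by the terminal word $\gamma^{(i)}_{/T}$ while the states record the variable vector $\Pi_V(\cdot)$.
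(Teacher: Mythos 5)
Your proof is correct and follows essentially the same route as the paper's: you translate the index-$k$ derivation step by step into a run of $M_G^k$, using the identity $\sum_j (\Pi_V(\alpha_i))_j = |(\alpha_i)_{/V}| \le k$ to place every intermediate state in $Q$, and you match the Parikh images of the read word and the derived word by accumulating the terminal contributions $\Pi_T(\gamma^{(i)})$. The paper phrases this as an induction on the length of the derivation prefix, whereas you unroll it into a direct verification of the chained transition sequence, but the content is identical.
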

\begin{proof}
    We show that if $S\Rightarrow^* \alpha$ is a prefix of a derivation of index $k$ then $M_G^k$ has a run $q_0 \xrightarrow{w} \Pi_V(\alpha)$ such that $w\in T^*$ and $\alpha_{/T}=_{\Pi} w$.
The proof is by induction on the length $i$ of the prefix.

{\bf $i=0$.} In this case $\alpha=S$, and since $q_0=\Pi_V(S)$ and
$S_{/T}=\varepsilon$ we are done.

{\bf $i>0$.} Since $S\Rightarrow^i \alpha$ there exist $\beta_1 A\beta_2\in
    (V\cup T)^*$ and a production $A\rightarrow \gamma$ such that
    $S\Rightarrow^{i-1}\beta_1 A \beta_2\Rightarrow \alpha$ and $\beta_1 \gamma
    \beta_2=\alpha$. By induction hypothesis, there exists a run of $M_G^k$
        such that $q_0\xrightarrow{w_1} \Pi_V(\beta_1 A \beta_2)$
        and ${(\beta_1 A \beta_2)}_{/T}=_{\Pi} w_1$.
    Then the definition of $M_G^k$ and the fact that $S\Rightarrow^i\alpha$ is of index $k$
        show that there exists a transition $(\Pi_V(\beta_1 A \beta_2),\gamma_{/T},\Pi_V(\alpha))$,
        hence we find that $q_0\xrightarrow{w_1\cdot \gamma_{/T}}\Pi_V(\alpha)$.
        Next we conclude from $(\beta_1 A \beta_2)_{/T} =_{\Pi} w_1$
        and $\alpha=\beta_1 \gamma \beta_2$ that $\alpha_{/T} =_{\Pi} w_1 \cdot \gamma_{/T}$ and we are done.

    Finally, if $\alpha\in T^*$ so that $S\Rightarrow^* \alpha$ is a derivation, then
        $q_0 \xrightarrow{w} \Pi_V(\alpha)=(0,\ldots,0)$ where $(0,\ldots,0)$ is an accepting state and $\alpha=\alpha_{/T} =_{\Pi} w$.\qed
\end{proof}

We now have all we need to prove the other inclusion.

\begin{proposition}
$L(G) \subseteq_{\Pi} L(M_G^{n\dg+1})$.
\label{prop:harddirection}
\end{proposition}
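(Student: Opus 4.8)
The plan is to derive the proposition directly as a corollary of the two lemmas already established, by composing them along the transitivity of the Parikh-inclusion relation $\subseteq_\Pi$. Concretely, the argument chains $L(G)$ to the bounded-index language $L_{n\dg+1}(G)$ via the Collapse Lemma, then chains $L_{n\dg+1}(G)$ to the automaton language $L(M_G^{n\dg+1})$ via Lemma~\ref{lem:deriv2parikhnfa}. No new combinatorial work is needed at this stage; everything reduces to plugging in the correct index parameter and invoking transitivity.

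First I would apply the Collapse Lemma (Lemma~\ref{lem:collapse}) to obtain $L(G) \subseteq_\Pi L_{n\dg+1}(G)$. This is the step carrying all the genuine difficulty of the development: it rests on the dimension estimate of Lemma~\ref{lem:dimension} and the index bound of Lemma~\ref{lem:derivation}, and ultimately on the termination of the compactification procedure. Here, however, it is available as a black box. Next I would instantiate Lemma~\ref{lem:deriv2parikhnfa}, which gives $L_k(G) \subseteq_\Pi L(M_G^{k})$ for every $k \geq 1$, with the specific choice $k = n\dg+1$. Since $n\dg+1 \geq 1$, the hypothesis $k \geq 1$ is satisfied, and this yields $L_{n\dg+1}(G) \subseteq_\Pi L(M_G^{n\dg+1})$.

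Finally I would compose the two inclusions. Writing them together,
\[
 L(G) \;\subseteq_\Pi\; L_{n\dg+1}(G) \;\subseteq_\Pi\; L(M_G^{n\dg+1}),
\]
and using that $\subseteq_\Pi$ is transitive (immediate from its definition as set inclusion between Parikh images), we conclude $L(G) \subseteq_\Pi L(M_G^{n\dg+1})$, as desired. The only point requiring a moment's care, and the closest thing to an obstacle, is a bookkeeping check rather than a mathematical one: the index $k = n\dg+1$ fed into Lemma~\ref{lem:deriv2parikhnfa} must match \emph{exactly} the automaton $M_G^{n\dg+1}$ named in the statement, and one must confirm $n\dg+1 \geq 1$ so that the lemma's precondition $k \geq 1$ holds. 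Both are evident, so I expect the proof to be a short, routine composition.
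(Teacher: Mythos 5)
Your proposal is correct and is essentially identical to the paper's own proof, which likewise obtains the result by chaining the Collapse Lemma ($L(G) \subseteq_\Pi L_{n\dg+1}(G)$) with Lemma~\ref{lem:deriv2parikhnfa} instantiated at $k = n\dg+1$ and using transitivity of $\subseteq_\Pi$. No further comment is needed.
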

\begin{proof}
\[\begin{array}{rcll}
    L(G) & \subseteq_\Pi & L_{n\dg+1}(G) & \mbox{(Collapse Lemma)} \\
    & \subseteq_\Pi & L(M_G^{n\dg+1}) & \mbox{(Lemma \ref{lem:deriv2parikhnfa})}
\end{array}\]\qed
\end{proof}

\section{An Application: Bounding the Size of Semilinear Sets}
\label{sec:appl}

Recall that a set $S \subseteq \nats^k$, $k \geq 1$, is {\em linear} if there is an {\em offset}
$\vec{b} \in \nats^k$ and {\em periods} $\vec{p}_1, \ldots, \vec{p}_j \in \nats^k$ such that
$S = \{ \vec{b}+ \sum_{i=1}^j \lambda_i \vec{p}_i \mid \lambda_1, \ldots,
\lambda_j \in \nats \}$. A set is {\em semilinear} if it is the union of a finite number of
linear sets. It is easily seen that the Parikh image of a regular language is semilinear.
Procedures for computing the
semilinear representation of the language starting from a regular expression
or an automaton are well-known (see e.g.\ \cite{Pil73}). Combined with Theorem \ref{thm:main}
they provide an algorithm for computing the Parikh image of a context-free language.

Recently, To has obtained an upper bound on the size of the semilinear representation
of the Parikh image of a regular language (see Theorem 7.3.1 of \cite{ToThesis}):

\begin{theorem}
\label{thm:to}
Let $A$ be an NFA with $s$ states over an alphabet of $\ell$ letters. Then $\Pi(L(A))$ is a union
of $\mathcal{O}(s^{\ell^2 +3\ell+3} \, \ell^{4\ell+6})$ linear sets with at most $\ell$ periods; the maximum
entry of any offset is $\mathcal{O}(s^{3\ell+3} \, \ell^{4\ell+6})$, and the maximum entry of any period
is at most $s$.
\end{theorem}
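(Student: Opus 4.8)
The plan is to characterise $\Pi(L(A))$ by an integer flow condition on $A$ and then to extract a semilinear representation whose three parameters are read off directly from the combinatorics of walks in $A$. Write $Q$ for the state set, so $|Q|=s$, let $E$ be the set of transitions, and let $\vec{\chi}(e)\in\{0,1\}^\ell$ be the unit vector of the letter read by a transition $e$. First I would establish the standard \emph{flow characterisation}: a vector $\vec{v}\in\nats^\ell$ lies in $\Pi(L(A))$ if and only if there is a nonnegative integer assignment $(x_e)_{e\in E}$ such that (a) Kirchhoff's law holds at every state (in-flow equals out-flow, corrected by one at the initial state $q_0$ and at the accepting state), (b) the support of $(x_e)$ together with $q_0$ carries an actual accepting walk (a connectivity condition), and (c) $\vec{v}=\sum_{e\in E} x_e\,\vec{\chi}(e)$. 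The point is that every accepting walk decomposes into a simple path from $q_0$ to an accepting state with a multiset of simple cycles attached along it, and conversely every flow meeting (a) and (b) is realised by such a decomposition. This characterisation is the engine for all three bounds.

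From it I would read off the \emph{period bound} immediately. The periods are Parikh images of simple cycles of $A$; a simple cycle visits at most $s$ states, hence uses at most $s$ transitions and reads at most $s$ letters, so every coordinate of every period is at most $s$, and there are at most $(s+1)^\ell$ distinct period vectors in all. To force \emph{at most $\ell$ periods per linear set}, I would select, within the cone spanned by the relevant periods, a rationally independent subset of size $j\le\ell$ (rational Carath\'eodory), and then patch the sublattice it generates back up to the full integer span by adjoining finitely many extra offsets. The number and magnitude of these patch offsets are controlled by the index of the sublattice, hence by the determinants of the chosen $j\times j$ period submatrices, which are at most $\ell!\,s^\ell$ since all entries are at most $s$. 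This keeps the period count at $j\le\ell$ at the cost of inflating the offsets.

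The heart of the proof, and the \textbf{main obstacle}, is the \emph{offset bound} $\mathcal{O}(s^{3\ell+3}\,\ell^{4\ell+6})$. For a fixed independent period family $\{\vec{p}_1,\dots,\vec{p}_j\}$ I must show that every $\vec{v}\in\Pi(L(A))$ assigned to that family can be written as $\vec{b}+\sum_{i=1}^j\lambda_i\vec{p}_i$ with $\lambda_i\in\nats$ and $\vec{b}$ of magnitude $\mathcal{O}(s^{3\ell+3}\ell^{4\ell+6})$. Here I would reduce $\vec{v}$ modulo the period lattice by Cramer's rule, bounding the residue by the determinant estimate above and bounding the base part by the length $\le s$ of the underlying simple path. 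The delicate issue is that the reduced point $\vec{b}$ must \emph{itself remain realisable}, i.e.\ it must still satisfy the connectivity condition (b): a purely polyhedral reduction may destroy the accepting-walk structure, and one then has to re-attach a bounded number of connecting cycles (each of magnitude $\le s$) to restore it. Reconciling this linear-algebraic reduction with realisability is exactly the step that resists a generic polyhedral argument and is responsible for the slack of order $s^{2\ell+3}\ell^{4\ell+6}$ beyond the bare determinant bound; a Pottier- or Eisenbrand--Shmonin-style small-solution estimate for the full flow system is the natural tool.

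Finally, the \emph{count of linear sets} follows by multiplication. There are at most $\binom{(s+1)^\ell}{\ell}=\mathcal{O}(s^{\ell^2})$ choices of an independent period family, and for each the offset analysis above yields $\mathcal{O}(s^{3\ell+3}\ell^{4\ell+6})$ offsets; the product is the claimed $\mathcal{O}(s^{\ell^2+3\ell+3}\,\ell^{4\ell+6})$. I expect the offset estimate to carry essentially all the technical weight, with the period bound and the counting bound being comparatively mechanical consequences of the flow characterisation.
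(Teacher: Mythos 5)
The paper does not actually prove this statement: it is imported as a black box from To's thesis (Theorem~7.3.1 of \cite{ToThesis}; see also \cite{KT10}) and only \emph{used} in Section~3, so there is no in-paper proof to measure your attempt against. Judged on its own merits, your sketch follows the same general route as To's original argument: decompose accepting runs into a short path plus simple cycles (the flow characterisation), take Parikh images of simple cycles as periods (which gives the entry bound $s$), reduce to at most $\ell$ linearly independent periods by an integer Carath\'eodory argument, and pay for that reduction with extra offsets controlled by determinants of $j\times j$ period submatrices. That is the correct skeleton.

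However, two steps are genuinely missing rather than merely compressed. First, the offset bound --- which you yourself single out as the heart of the proof --- is not established: after reducing $\vec{v}$ modulo the period lattice you correctly observe that the reduced point may no longer be realisable as an accepting flow, and you propose to repair this by re-attaching connecting cycles, but you give no argument that a bounded number of such repairs suffices or that they do not push the offset back outside the claimed range; invoking ``a Pottier- or Eisenbrand--Shmonin-style estimate'' names a tool without applying it, and this is precisely where the exponents $3\ell+3$ and $4\ell+6$ have to come from. Second, the final count of linear sets does not ``follow by multiplication'': a bound of $\mathcal{O}(s^{3\ell+3}\,\ell^{4\ell+6})$ on the maximum \emph{entry} of an offset does not bound the \emph{number} of offsets per period family by the same quantity --- a priori it only bounds it by that quantity raised to the power $\ell$. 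To obtain the stated cardinality one must construct the offset set explicitly as bounded integer combinations of a controlled family of generators and count those combinations directly; that counting argument is absent from your sketch.
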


Plugging Theorem \ref{thm:main} into Theorem \ref{thm:to}, we get the (to our
knowledge) best existing upper bound on the size of the semilinear set
representation of the Parikh image of a context-free language.
Let $G=(V,T,P,S)$ be a context-free grammar of degree $\dg$ with $n=|V|$ and $t=|T|$.
Let $p$ be the total number of occurrences of terminals
in the productions of $G$, i.e., $p = \sum_{X \rightarrow \alpha \in P} |\alpha_{/T}|$.
The number of states of $M_G^{n\dg+1}$ is ${ n + n\dg + 1 \choose n}$. Recall that
the transitions of $M_G^{n\dg+1}$ are labelled with words of the form $\gamma_{/T}$,
where $\gamma$ is the right-hand-side of some production. Splitting transitions, adding intermediate states, and then removing $\epsilon$-transitions yields an NFA
with ${ n + n\dg + 1 \choose n} \cdot p$ states. So we finally obtain for the parameters $s$ and $\ell$ in Theorem \ref{thm:to} the values $s := {n + n\dg + 1 \choose n} \cdot p$, and $\ell := t$.
This result (in fact a slightly stronger one) has
been used in \cite{EG11} to provide a polynomial algorithm for a language-theoretic problem
relevant for the automatic verification of concurrent programs.

\section{Conclusions and Related Work}
\label{sec:related}

For the sake of comparison we will assume throughout this section that all
grammars have degree \(\dg \leq 1\).  Given $G$ a context-free grammar with $n$
variables, we have shown how to construct an NFA~$M$ with $\mathcal{O}(4^n)$
states such that $L(G)$ and $L(M)$ have the same Parikh image. We compare this
result with previous proofs of Parikh's theorem.

Parikh's proof~\cite{Parikh66} (essentially the same proof is given in~\cite{Sal73})
shows how to obtain a Parikh-equivalent regular expression from a
finite set of parse trees of $G$.
The complexity of the resulting construction is not studied.
By its definition, the regular expression basically consists of the sum of words
obtained from the parse trees of height at most $n^2$.
This leads to the admittedly rough bound that the regular expression consists of at most $\mathcal{O}(2^{2^{n^2-1}})$ words each of length at most $\mathcal{O}(2^{n^2})$.

Greibach~\cite{Gre72} shows that a particular substitution operator on
language classes preserves semilinearity of the languages. This result
implies Parikh's theorem, if the substitution operator is applied to the
class of regular languages. It is hard to extract a construction from this
proof, as it relies on previously proved closure properties of language classes.

Pilling's proof~\cite{Pil73} (also given in~\cite{Con71}) of Parikh's theorem uses algebraic properties of commutative regular languages.
From a constructive point of view, his proof leads to a procedure
that iteratively replaces a variable of the grammar $G$ by a regular
expression over the terminals and the other variables.
This procedure finally generates a regular expression which is Parikh-equivalent to~$L(G)$. Van Leeuwen~\cite{Lee74} extends Parikh's theorem to other
language classes, but, while using very different concepts and terminology,
his proof leads to the same construction as Pilling's.
Neither \cite{Pil73} nor \cite{Lee74} study the size of the resulting
regular expression.

Goldstine~\cite{Gol77} simplifies Parikh's original proof.
An explicit construction can be derived from the proof, but it is involved:
for instance, it requires to compute for each subset of variables, the computation
of all derivations with these variables up to a certain size depending
on a pumping constant.

Hopkins and Kozen~\cite{HK99} generalize Parikh's theorem to commutative Kleene algebra.
Like in Pilling~\cite{Pil73} their procedure to compute a Parikh-equivalent regular expression is iterative;
 but rather than eliminating one variable in each step, they treat all variables in a symmetric way.
Their construction can be adapted to compute a Parikh-equivalent finite automaton.
Hopkins and Kozen show (by algebraic means)
that their iterative procedure terminates after $\mathcal{O}(3^n)$ iterations for
a grammar with $n$ variables. In~\cite{EKL09:newtProgAn} we reduce this bound
(by combinatorial means) to~$n$ iterations.
The construction yields an automaton, but it is much harder to explain than ours.
The automaton has size~$\mathcal{O}(n^n)$.

In~\cite{AEI02} Parikh's theorem is derived from a small set of
 purely equational axioms involving fixed points.
It is hard to derive a construction from this proof.

In \cite{esp97} Parikh's theorem is proved by showing that the Parikh image of
a context-free language is the union of the sets of solutions of a finite
number of systems of linear equations. In \cite{seidl05} the theorem is also
implicitly proved, this time by showing that the Parikh image is the set of
models of an existential formula of Presburger arithmetic. While the
constructions yielding the systems of equations and the Presburger formulas are
very useful, they are also more complicated than our construction of the Parikh
automaton. Also, neither \cite{esp97} nor \cite{seidl05} give bounds on the
size of the semilinear set.

\section*{Acknowledgments}
We thank two anonymous referees for very useful suggestions.

\end{document}